\newtheorem{theorem}{Theorem}[section]
\newtheorem{definition}[theorem]{Definition}
\newtheorem{lemma}[theorem]{Lemma}
\newcommand{\Xomit}[1]{}
\newcommand{\rco}{resulting in a contradiction.}
\newcommand{\sxv}{score_x (v)}
\newcommand{\sv}{score (v)}
\newcommand{\la}{\leftarrow}
\newcommand{\bl}{\begin{lemma}}
\newcommand{\el}{\end{lemma}}
\newcommand{\congest}{{\sc Congest}}
\newcommand{\Tau}{H}
\newcommand{\hide}[1]{}
\newcommand{\centers}{blocker set }
\newcommand{\blocker}{blocker set }
\renewcommand\labelenumi{(\roman{enumi})}
\renewcommand\theenumi\labelenumi
\begin{document}

\title{A Deterministic Distributed Algorithm for Exact Weighted All-Pairs Shortest Paths in $\tilde{O}(n^{3/2})$ Rounds}
\author{Udit Agarwal$^{\star}$, Vijaya Ramachandran\thanks{Dept. of Computer Science, University of Texas, Austin TX 78712. Email: {\tt udit@cs.utexas.edu, vlr@cs.utexas.edu}. This work was supported in part by NSF Grant CCF-1320675. }, Valerie King\thanks{Dept. of Computer Science, University of Victoria, BC, Canada. Email: {\tt val@uvic.ca}},  Matteo Pontecorvi\thanks{Nokia Bell Labs, Paris-Saclay, Nozay 91620, France. Email: {\tt matteo.pontecorvi@nokia.com}}}

\maketitle
\begin{abstract}
We present a deterministic distributed algorithm to compute all-pairs shortest paths(APSP) in an edge-weighted  directed or undirected graph. Our algorithm runs in $\tilde{O}(n^{3/2})$ rounds in the Congest model, where $n$ is the number of nodes in the graph.
This is the first $o(n^2)$ rounds deterministic distributed algorithm for the weighted APSP problem. 
Our algorithm is fairly simple and incorporates 
a deterministic distributed algorithm we develop for computing a `blocker set'~\cite{King99}, which has been used earlier in sequential
dynamic computation of APSP.
\end{abstract}

\newpage

 \section{Introduction}\label{sec:intro}

The design of distributed algorithms 
 for various network (or graph) problems such as shortest paths~\cite{LP13,Nanongkai14,Elkin17,HNS17}
 and 
 minimum spanning tree~\cite{GHS83,PR99,GKP98,KP98} is a well-studied area of research.
The most widely considered model for 
studying distributed algorithms is the \congest{}  model~\cite{Peleg00}
(also see~\cite{Elkin17,HNS17,HW12,LP13,Nanongkai14,Ghaffari15}),
described in more detail below.
In this paper we consider the problem of computing all pairs shortest paths (APSP) in a weighted
directed (or undirected) graph in this model.

The problem of computing all pairs shortest paths (APSP)
in distributed networks is a very fundamental problem, and 
there has been a considerable line of work for the \congest{} model
as described below in Section~\ref{sec:prior}. However, for a weighted
graph no deterministic algorithm was known in this model other than 
a trivial method that runs in $n^2$ rounds.
In this paper we  present the first algorithm for this problem
in the \congest{} model that computes
weighted APSP deterministically in less than $n^2$ rounds. Our algorithm computes
APSP deterministically in $O(n^{3/2} \cdot \sqrt{\log n})$ rounds in this model
in both directed and undirected graphs.

Our distributed APSP algorithm is quite simple and we give an overview in Section~\ref{sec:outline}. 
It uses the
notion of a blocker set introduced by King~\cite{King99} in the context of sequential fully dynamic
APSP computation. Our deterministic distributed algorithm for computing a blocker set is the most
nontrivial component of our algorithm, and is described in Section~\ref{sec:blocker}.
In very recent work~\cite{AR18},
these results have been incorporated in a deterministic APSP algorithm 
that runs in $\tilde{O}(n^{4/3})$ rounds.
The key to this improvement is a novel pipelined
method that improves Step~\ref{algMain:h-hop-sssp-tree} in our Algorithm~\ref{algMain}.
 
 \subsection{The {\sc Congest} Model}	\label{sec:congest}

In the {\sc Congest} model~\cite{Peleg00},
there are $n$ independent processors 
interconnected in
a network.
We refer to these processors as nodes.
These nodes are connected in the network by bounded-bandwidth links 
which we refer to as edges.
The network is modeled by a  graph $G = (V,E)$
where $V$ is the set of processors and 
$E$ is the set of edges or links between these processors.
Here $|V| = n$ and $|E| = m$.

Each node is assigned a unique ID from 
$1$ to $n$ and 
has infinite computational power.
Each node has limited topological knowledge 
and only knows about its incident edges.
For the weighted APSP problem we consider, 
 each edge has a positive integer weight
and the edge weights are  bounded by $poly(n)$.
Also if the edges are directed, 
the corresponding communication channels are bidirectional
and hence the communication network can be represented 
by the underlying undirected graph $U_G$ of $G$
(this is also considered in~\cite{HNS17,PR18,GL17}).

The computation proceeds in rounds. In each round each processor can send a message of 
size $O(\log n)$ along edges incident to it, and it receives the messages sent to it in the previous
round. The model allows a node to send different message along different edges though we do not
need this feature in our algorithm.
The performance of an algorithm in the \congest{} model is measured by its
round complexity, which is the worst-case
number of rounds of distributed communication.
Hence the goal is to minimize the round complexity 
of an algorithm.

\subsection{Prior Work} \label{sec:prior}

{\bf Unweighted APSP.}
For APSP in unweighted undirected graphs,  $O(n)$-round algorithms were given 
independently in~\cite{HW12,PRT12}. An improved $n +O(D)$-round algorithm was then given in~\cite{LP13}, where $D$ is the diameter of the undirected graph. 
Although this latter result was claimed only for
undirected graphs, the algorithm in~\cite{LP13} is also a correct $O(n)$-round APSP algorithm for
directed unweighted graphs. 
The message complexity of directed unweighted APSP was reduced to $mn +O(m)$ in a recent 
algorithm~\cite{PR18} that runs in $\min\{2n, n+O(D)\}$rounds 
 (where $D$ is now the directed diameter of the graph).
 A lower bound of
$\Omega (n/\log n)$ for the number of rounds needed to compute the diameter of the graph in the \congest{}  model
is given in~\cite{FHW12}.

{\bf Weighted APSP.}
While unweighted APSP is well-understood in the \congest{} model much remains to be done in the
weighted case. For deterministic algorithms,
weighted SSSP for a single source can be computed in $n$ rounds using the classic Bellman-Ford algorithm~\cite{Bellman58,Ford56},
and this leads to a simple deterministic weighted APSP algorithm that runs in $O(n^2)$ rounds. Nothing better was known for the number of
rounds for deterministic weighted APSP until our current results.

{\bf Exact Randomized APSP Algorithms.}
Even with randomization, nothing better than $n^2$ rounds was known for exact weighted APSP until recently,
when Elkin~\cite{Elkin17}  gave a randomized weighted APSP algorithm that runs in $\tilde{O}(n^{5/3})$ rounds
and this was further improved to $\tilde{O}(n^{5/4})$ rounds in
Huang et al.~\cite{HNS17}.
Both of these are w.h.p. results.

{\bf Deterministic Approximation Algorithms for APSP.}
There are deterministic algorithms for 
approximating
weighted all pairs shortest path problem,
and these run in $\tilde{O}(n)$ rounds for both directed~\cite{LP15} and undirected graphs~\cite{LP15,HKN16,EN16}.
 
\section{Overview of the APSP Algorithm}\label{sec:outline}

 Let $G=(V,E)$ be an edge-weighted graph (directed or undirected) with weight function $w$ and with $|V|=n$ and $|E|=m$. The \congest{} model assumes that every message is of $O(\log n)$-bit size, which restricts $w(e)$ to be an $O(\log n)$ size integer value. However,
 outside of this restriction imposed by the \congest{} model, our algorithm works for arbitrary edge-weights (even negative edge-weights as long as there is no negative-weight cycle). 
 Given a path $p$
 we will use {\it weight} or {\it distance} to denote the sum of the weights of the edges on the path and
 {\it length} (or sometimes {\it hops})  to denote the number of edges on the path.
 We denote the shortest path
 distance from a vertex $x$ to a vertex $y$ in $G$ by $\delta(x,y)$. In the following we will assume that
 $G$ is directed,
 but the same algorithm works for undirected graphs as well.
 
 An $h$-hop SSSP tree for $G$ rooted at a vertex $r$ is a tree of height $h$ where the weight of the path 
 from $r$ to a
 vertex $v$ in the
 tree is the shortest path distance in $G$ among all paths that have at most $h$ edges.
 In the case of multiple  paths with the same distance from $r$ to 
 $v$ we assume that $v$ chooses the parent vertex with minimum id as its parent in the 
 $h$-hop SSSP tree.
 We will use $h = \sqrt {n \cdot \log n}$  in our algorithm.

  Our overall APSP algorithm is given in Algorithm~\ref{algMain}. In Step~\ref{algMain:h-hop-sssp-tree} an $h$-hop 
 SSSP tree and the associated SSSP distances, $\delta_h (r,v)$, are computed at vertex $v$ for each root 
 $r\in V$.

   Step 2 computes a {\it \blocker } $Q$ of $q=O((n \log n)/h)$ 
 nodes  
  for the collection of $h$-hop SSSP
  trees constructed in Step 1.
  This step is described in detail in the  next section, where we describe a distributed implementation of
  King's sequential method~\cite{King99}. Our method computes the \blocker $Q$
   in $O(nh + (n^2 \log n)/h)$ rounds. We now 
  give the definition of a
  \centers for a collection of rooted $h$-hop trees.

\begin{algorithm}[H]	
\caption{ Main Algorithm}
\begin{algorithmic}[1]	
\State {\bf for each} $v\in V$ in sequence: compute an $h$-hop SSSP rooted at $v$ by  computing at
each $x\in V$ the $h$-hop shortest path distances $\delta_h(v,x)$, the parent pointer
 in the SSSP tree, and the 
 hop-length $h_v (x)$ in $T_v$, i.e., 
 the number of edges on the path from $v$ to $x$.\label{algMain:h-hop-sssp-tree}
\State  Compute a \centers $Q$ of size $\Theta ((n \log n)/h)$ for the $h$-hop SSSP trees computed in Step~\ref{algMain:h-hop-sssp-tree} using Algorithm~\ref{algCB}.  \label{algMain:centers}
\State  {\bf for each} $c\in Q$  in sequence: 
compute SSSP with root $c$ and $\delta (c,v)$ at each $v\in V$ \label{algMain:bf}
\State {\bf for each} $c \in Q$ in sequence:  
broadcast to all other nodes 
$c$'s ID and
the $\delta_h(v,c)$ values it computed for each $v$  in Step~\ref{algMain:h-hop-sssp-tree}.	\label{algMain:broadcast}
\State {\bf Local step at each node:} At each $v$ compute $\delta(u,v)$ values for each $u \in V$ by using equation~\ref{eq:SD} with $\delta_h (u,v)$ from Step~\ref{algMain:h-hop-sssp-tree}, the $\delta(c,v)$ values from 
Step~\ref{algMain:bf},  and the $\delta_h (u,c)$ values received in Step~\ref{algMain:broadcast}. \label{algMain:local-compute}
\end{algorithmic} \label{algMain}
\end{algorithm}

  \begin{definition}[{\bf Blocker Set}~\cite{King99}]
  Let $\Tau$ be a collection of rooted $h$-hop trees in a graph $G=(V,E)$. A set $Q\subseteq V$ is a 
\emph{  \centers } for $\Tau$ if every root to leaf path 
 of length $h$ 
  in every tree in $\Tau$ contains a vertex in $Q$.
  Each vertex in $Q$ is called a \emph{blocker vertex} for $\Tau$.
  \end{definition}
  
  In Step 3 of Algorithm~\ref{algMain} we compute $\delta (c,v)$ for each $c\in Q$ and for all $v\in V$.
  In Step 4 each blocker vertex $c$ broadcasts all of the $\delta_h(v,c)$ values it computed
  in Step~\ref{algMain:h-hop-sssp-tree}.
  Finally, in Step~\ref{algMain:local-compute} each node
  $v$ computes $\delta(u,v)$ for each $u\in V$ using the values it computed or received in the
  earlier steps. More specifically, $v$ computes $\delta(u,v)$ as:
  \begin{equation}\label{eq:SD}
  \delta (u,v) = \min \left\{\delta_h(u,v), ~\min_{c\in Q} \left(\delta_h(u,c) + \delta (c,v)\right)\right\}
  \end{equation}
  
  \begin{lemma}
  The $\delta(u,v)$ values computed at each $v$ in Step~\ref{algMain:local-compute} of Algorithm~\ref{algMain} 
  are the correct shortest path distances.
  \end{lemma}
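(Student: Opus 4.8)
The plan is to establish the two directions of the equality between the value computed via~\eqref{eq:SD} and the true distance $\delta(u,v)$. The direction ``computed value $\ge \delta(u,v)$'' is immediate: $\delta_h(u,v)$ is the weight of an actual $u$-$v$ path, and for each $c \in Q$ the quantity $\delta_h(u,c) + \delta(c,v)$ is the weight of a genuine walk from $u$ to $v$ (an $h$-hop shortest path to $c$ followed by a shortest path from $c$ to $v$); since no walk can beat the shortest-path distance, every term of the minimum is at least $\delta(u,v)$. Hence the whole argument reduces to exhibiting one term of the minimum that equals $\delta(u,v)$.

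For the reverse direction I would split on the hop-length of shortest $u$-$v$ paths. If some shortest $u$-$v$ path uses at most $h$ edges, then $\delta_h(u,v) = \delta(u,v)$ and the first term of~\eqref{eq:SD} already attains the distance. The substantive case is when every shortest $u$-$v$ path has more than $h$ edges; here I must produce a blocker vertex $c \in Q$ lying on a shortest $u$-$v$ path whose prefix from $u$ to $c$ uses at most $h$ edges, so that $\delta_h(u,c) = \delta(u,c)$ and, by optimal substructure, $\delta_h(u,c) + \delta(c,v) = \delta(u,c) + \delta(c,v) = \delta(u,v)$.

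To locate such a $c$, fix a shortest $u$-$v$ path $P = (u = x_0, x_1, \ldots, x_\ell = v)$ with $\ell > h$, and let $x_j$ be the vertex of largest index on $P$ for which $\delta_h(u,x_j) = \delta(u,x_j)$. Every prefix $x_0, \ldots, x_i$ with $i \le h$ is a shortest path of at most $h$ edges, so $j \ge h$; and since all shortest $u$-$v$ paths exceed $h$ hops, $j < \ell$. The crux is to show that $x_j$ sits at depth exactly $h$ in the tree $T_u$. I would argue this by an exchange step: the $T_u$-path to $x_j$ is a shortest path of $h_u(x_j) \le h$ edges realizing $\delta(u,x_j)$, and were $h_u(x_j) < h$, appending the edge $(x_j, x_{j+1})$ of $P$ would yield an at-most-$h$-edge path to $x_{j+1}$ of weight $\delta(u,x_{j+1})$, forcing $\delta_h(u,x_{j+1}) = \delta(u,x_{j+1})$ and contradicting the maximality of $j$. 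Since $T_u$ has height $h$, any depth-$h$ vertex is a leaf, so the $T_u$-path to $x_j$ is a root-to-leaf path of length $h$, and by the blocker-set property it contains some $c \in Q$.

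It then remains to check that this $c$ does the job, which is routine: $c$ lies on the shortest $T_u$-path to $x_j$, so its $T_u$-prefix is a shortest path of at most $h$ edges and $\delta_h(u,c) = \delta(u,c)$; and concatenating the $T_u$-path from $u$ to $c$, the $T_u$-path from $c$ to $x_j$, and the $P$-suffix from $x_j$ to $v$ shows $\delta(u,c) + \delta(c,v) = \delta(u,v)$, so the term for $c$ in~\eqref{eq:SD} attains the distance. I expect the main obstacle to be precisely the depth-exactly-$h$ claim, i.e.\ reconciling an arbitrary shortest path $P$ with the specific tie-broken tree $T_u$: one cannot assume $P$ itself lies in $T_u$, and must instead pass through the last ``$\le h$-hop-reachable'' vertex $x_j$ and the exchange step above. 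The remaining care is bookkeeping for unreachable pairs (both sides are $+\infty$) and noting that the argument is direction-agnostic, so it covers the undirected case verbatim.
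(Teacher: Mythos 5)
Your proof is correct, and it ends at the same decomposition the paper uses: the minimum in Equation~\ref{eq:SD} is attained either by $\delta_h(u,v)$ or by $\delta_h(u,c)+\delta(c,v)$ for some blocker vertex $c$ lying on a shortest $u$--$v$ path whose prefix to $c$ has at most $h$ hops. The difference is in how $c$ is produced. The paper fixes an arbitrary shortest path $p$ with more than $h$ edges and asserts in one sentence that, ``by the property of the blocker set,'' some $c\in Q$ lies along $p$ within the tree $T_u$; taken literally this is a jump, because the blocker property only covers root-to-leaf paths of length exactly $h$ \emph{inside the trees}, and an arbitrary shortest path $p$ need not follow tree edges at all (each $T_u$ commits to one shortest path per vertex via min-ID tie-breaking). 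Your argument supplies exactly the missing justification: you pass to the last vertex $x_j$ on $p$ with $\delta_h(u,x_j)=\delta(u,x_j)$, use the exchange step to force $h_u(x_j)=h$ (otherwise appending $(x_j,x_{j+1})$ contradicts maximality of $j$), so that the $T_u$-path to $x_j$ is a genuine length-$h$ root-to-leaf path to which the blocker definition applies, and then you repair $p$ into the hybrid shortest path (tree path from $u$ through $c$ to $x_j$) followed by (the suffix of $p$ from $x_j$ to $v$), which certifies $\delta_h(u,c)+\delta(c,v)=\delta(u,v)$. You also state the easy ``$\geq$'' direction explicitly, which the paper leaves implicit. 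In short: same skeleton, but your write-up proves the one claim the paper hand-waves, and that claim is precisely where the specific tie-broken trees (rather than arbitrary shortest paths) matter; the paper's version buys brevity, yours buys a proof that actually goes through verbatim under the stated definitions.
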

  
  \begin{proof}
  Fix vertices $u,v$ and consider a shortest path $p$ from $u$ to $v$. If $p$ has at most $h$ edges
  then $w(p) = \delta_h(u,v)$ and this value is directly computed at $v$ in Step~\ref{algMain:h-hop-sssp-tree}.
  Otherwise by the property of the
  \centers  $Q$ we know that there is a vertex $c\in Q$ which lies along $p$ within the
  $h$-hop SSSP tree rooted at $u$ that is constructed in Step~\ref{algMain:h-hop-sssp-tree}.
  Let $p_1$ be the portion of $p$ from $u$ to $c$ and let $p_2$ be the portion from $c$ to $v$.
  So $w(p_1) = \delta_h(u,c)$, $w(p_2)= \delta (c,v)$ and $w(p) = w(p_1) + w(p_2)$.

  The value $\delta_h(u,c)$ is received by $v$ in the broadcast step for center $c$ in Step~\ref{algMain:broadcast}.
  The value $\delta(c,v)$ is computed at $v$ when SSSP with root $c$ is computed in 
  Step~\ref{algMain:bf}. Hence $v$ has the information needed to compute $\delta(u,v)$ 
  in Step~\ref{algMain:local-compute} for each $u$ using Equation~\ref{eq:SD}.
  \end{proof}
  
  We now bound the number of rounds needed for each step in Algorithm~\ref{algMain} (other than
  Step~\ref{algMain:centers}). For this we first state bounds for some simple primitives that will be used
  to execute these steps.
  
  \begin{lemma}\label{lem:bf}
  Given a source $s\in V$, using the Bellman-Ford algorithm:\\
  (a) the shortest path distance $\delta (s,v)$ can be computed at each $v\in V$ in $n$ rounds.\\
  (b) the $h$-hop shortest path distance $\delta_h(s,v)$,
  the hop length $h_s(v)$, 
    and parent pointer 
  in the $h$-hop SSSP tree rooted at $s$ can be computed at each $v\in V$ in $h$ rounds.
  \end{lemma}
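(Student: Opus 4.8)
The plan is to give the standard analysis of the Bellman--Ford algorithm adapted to the \congest{} model, proving both parts through a single induction on the round number. First I would maintain at each vertex $v$ a distance estimate $d_i(v)$ after round $i$, initialized by $d_0(s)=0$ and $d_0(v)=\infty$ for $v\neq s$. In round $i$ each vertex $u$ sends its current estimate $d_{i-1}(u)$ along each of its outgoing edges; upon receiving the values $d_{i-1}(u)$ from its in-neighbors $u$, vertex $v$ sets $d_i(v)=\min\{d_{i-1}(v),\ \min_{(u,v)\in E}(d_{i-1}(u)+w(u,v))\}$. Since edge weights are bounded by $poly(n)$ and any relevant path has at most $n$ edges, every finite estimate is an integer bounded by $poly(n)$ and hence fits in a single $O(\log n)$-bit message, so each round is a legal \congest{} round.

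The core claim, proved by induction on $i$, is that $d_i(v)$ equals the minimum weight of any path from $s$ to $v$ using at most $i$ edges (with $d_i(v)=\infty$ if no such path exists). The base case $i=0$ is immediate. For the inductive step, any path of at most $i$ edges to $v$ either has at most $i-1$ edges --- covered by $d_{i-1}(v)$ --- or ends with a final edge $(u,v)$ preceded by a path of at most $i-1$ edges to $u$, whose best weight is $d_{i-1}(u)$ by the inductive hypothesis; the update rule takes the minimum over exactly these possibilities, which establishes the claim.

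Part (b) then follows by stopping after $h$ rounds, since $d_h(v)=\delta_h(s,v)$ by the claim. To recover the tree I would have each $v$, whenever its estimate strictly improves in round $i$ via an in-neighbor $u$, record $u$ as its tentative parent and set its hop-length to $h_s(u)+1$, using the hop-length that $u$ transmits together with its estimate (an extra $O(\log n)$ bits). Following parent pointers from any $v$ back to $s$ then traces a path of weight $d_h(v)$ with exactly $h_s(v)$ edges, so the parent pointers form an $h$-hop SSSP tree; the minimum-id tie-break from the definition is enforced by preferring, among the in-neighbors that attain the current minimum, the one with smallest id. Part (a) follows from the same argument run for $n$ rounds: in a graph with no negative-weight cycle every shortest path is simple and uses at most $n-1$ edges, so $d_{n-1}(v)=\delta(s,v)$, and running $n$ rounds certainly suffices.

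The routine parts are the induction and the message-size bound. The one point that needs care is part (b)'s bookkeeping: ensuring that the maintained parent pointers and hop-lengths stay mutually consistent, so that together they genuinely describe a single tree realizing the $h$-hop distances, and that the minimum-id tie-break is applied correctly at each update rather than only at the end.
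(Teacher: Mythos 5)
The paper itself gives no proof of this lemma: it is invoked as the classical Bellman--Ford algorithm, so there is no ``paper approach'' to match and your proposal must stand on its own. Part (a) and the distance claim in part (b) do stand: the induction showing that $d_i(v)$ equals the minimum weight of an $s$-to-$v$ path with at most $i$ edges is the standard argument, and the observation that estimates fit in $O(\log n)$-bit messages is correct.

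The genuine gap is in part (b)'s tree bookkeeping --- exactly the point you flagged as ``needing care'' but then did not resolve, and it cannot be resolved as written. Under your rule, $v$ records parent $u$ and hop-length $h_s(u)+1$ at the moment its estimate improves, using $u$'s state at that time; but $u$ may improve its own estimate in a later round and re-parent, after which $v$'s stored pointer and hop-length refer to a stale path. Concretely, take $V=\{s,a,b,v\}$ with directed edges $s\to a$ of weight $10$, $s\to b$ of weight $1$, $b\to a$ of weight $1$, $a\to v$ of weight $1$, and $h=2$. After round $1$, $a$ has estimate $10$ with parent $s$; in round $2$, $a$ improves to $2$ and re-parents to $b$, while $v$ sets estimate $11=\delta_2(s,v)$, parent $a$, and hop-length $2$. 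Following parent pointers from $v$ now traces $s\to b\to a\to v$: three hops (exceeding $h$) of total weight $3\neq 11=d_2(v)$, so your concluding claim that the pointers trace a path of weight $d_h(v)$ with exactly $h_s(v)$ edges is false, and the pointer structure is not a tree of height $h$ realizing the $h$-hop distances. Worse, no amount of smarter bookkeeping or tie-breaking fixes this: in this graph $v$'s parent must be $a$, and either $a$'s tree path is $s\to a$ (weight $10\neq\delta_2(s,a)=2$) or it is $s\to b\to a$ (making $v$'s tree path have three hops and weight $3\neq\delta_2(s,v)$), so \emph{no} tree satisfying the paper's definition of an $h$-hop SSSP tree exists for this instance. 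A correct treatment must therefore weaken what ``parent pointer in the $h$-hop SSSP tree'' promises (for instance, guaranteeing only that each stored pair $(\delta_h(s,v),h_s(v))$ is realized by some path in $G$, with pointer consistency holding only where no ancestor subsequently re-parented), and any use of these trees downstream --- as in the paper's blocker-set machinery --- has to be checked against that weaker guarantee; this is a subtlety the lemma statement itself glosses over, and your proof inherits it rather than exposing it.
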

  
  \begin{lemma}\label{lem:k-bc}
  A node  $v$ can broadcast  $k$ local values to all other nodes 
    reachable from it
  deterministically in $O(n+k)$ rounds.
  \end{lemma}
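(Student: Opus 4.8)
The plan is to reduce this to a pipelined broadcast along a breadth-first search (BFS) tree of the underlying undirected communication graph $U_G$ rooted at $v$. Since every (directed) edge of $G$ corresponds to a bidirectional communication link in $U_G$, any node reachable from $v$ in $G$ lies in the connected component of $v$ in $U_G$; hence it suffices to deliver all $k$ values to every node of this component. Note also that each of the $k$ values is an $O(\log n)$-bit integer, so a single value fits in one \congest{} message.

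First I would build such a BFS tree deterministically by flooding: in round $1$, $v$ sends a discovery token to all of its neighbors, and thereafter each node, upon receiving its first token, fixes as its parent the neighbor from which that token arrived (breaking ties by smallest ID) and forwards the token to all of its neighbors. This stabilizes in $O(D)$ rounds, where $D \le n-1$ is the eccentricity of $v$, so this phase costs $O(n)$ rounds. Afterwards each node knows its parent pointer, and it suffices for a node to accept and re-forward values only along that pointer (a node simply ignores values arriving from non-parent neighbors), so there is no need for a parent to learn the identities of its children.

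Second, I would pipeline the $k$ values. Let $v$ fix a canonical ordering $m_1,\dots,m_k$ of its values and (starting after tree construction) send $m_i$ to all neighbors in round $i$; each internal node buffers the values received from its parent and retransmits them, in the order received, one per round to all of its neighbors. The heart of the argument is a congestion/pipelining invariant, which I would establish by induction on the depth $d$: a node at depth $d$ receives value $m_i$ exactly in round $i+d-1$. The base case ($d=1$) is immediate from $v$'s sending schedule, and the inductive step follows because a depth-$d$ node then receives exactly one new value per round, in the fixed order, so it can forward exactly one new value per round to its children without any backlog accumulating. Consequently no queue ever grows beyond constant size, and since each edge carries at most one $O(\log n)$-bit value per round, the schedule respects the \congest{} bandwidth bound. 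Substituting $d \le D \le n-1$ shows that the last value $m_k$ reaches the deepest node by round $k+D-1 = O(n+k)$, and adding the $O(n)$ tree-construction cost keeps the total at $O(n+k)$. Determinism is clear, since both the tie-broken BFS and the fixed value ordering are deterministic.

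The step I expect to be the main obstacle is verifying the pipelining invariant, i.e.\ confirming that forwarding in FIFO order never lets a backlog build up, so that the $k$ values stream through the tree with only a constant additive delay per level rather than multiplicatively blowing up to $O(kD)$ rounds. Everything else (the BFS construction, the bandwidth accounting, and the reachability observation) is routine.
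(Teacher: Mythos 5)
Your proposal is correct and matches the paper's own proof: both construct a BFS tree rooted at $v$ in $O(n)$ rounds and then pipeline the $k$ values, using the same invariant that the $i$-th value reaches every node at depth $d$ in round $i+d-1$, for a total of $O(n+k)$ rounds. The only differences are presentational (you have nodes ignore non-parent messages instead of having parents know their children, and you spell out the induction that the paper states as "readily seen").
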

  
  \begin{proof}
 We construct a BFS tree rooted at $v$ in at most $n$ rounds and then we pipeline the broadcast
  of the $k$ values. The root $v$ sends the $i$-th value to all its children
  in round $i$ for $1\leq i \leq k$. In a general
  round, each node $x$ that received a value in the previous round sends that value to all its children.
  It is readily 
  seen that the $i$-th value reaches all nodes at hop-length $d$ from $v$
 in the BFS tree  
   in round $i+d-1$, and 
  this is the only value that node $x$ receives in this round.
  \end{proof}
 
 \begin{lemma}\label{lem:all-to-all-bc}
  All $v\in V$ can broadcast a local value  to every other node 
they can reach
 in $O(n)$ rounds deterministically.
  \end{lemma}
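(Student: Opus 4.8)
The plan is to reduce all-to-all broadcast to a single gather (upcast) followed by a single dissemination (downcast) along one rooted spanning tree per connected component, so that the $n$ values are first collected at one root and then redistributed by invoking Lemma~\ref{lem:k-bc}. Since in the \congest{} model communication is bidirectional, I work on the underlying undirected graph $U_G$, and I handle each connected component independently and in parallel: distinct components share no edges, so they do not interfere, and since every component has at most $n$ nodes a per-component bound of $O(n)$ rounds suffices. (Here ``every other node $v$ can reach'' means every other node in $v$'s connected component of $U_G$, exactly the set reached by the BFS tree built in Lemma~\ref{lem:k-bc}.)

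First I would elect, within each component, the node $r$ of minimum ID as the root and simultaneously build a rooted spanning tree $T$ of that component. This is done by flooding the minimum ID: in each round every node forwards the smallest ID it has seen so far and sets its parent pointer to the neighbor from which that value first arrived. After at most $n-1$ rounds this stabilizes, every node knows $r$, and the parent pointers computed with respect to the \emph{final} minimum $r$ form the BFS tree of $r$, which has depth at most $n-1$. This phase costs $O(n)$ rounds.

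Next I would gather all $n$ source values at $r$ by a pipelined convergecast along $T$: each node keeps a queue of values it still must forward (initially just its own value, augmented as it receives values from its children), and in each round it sends one queued value to its parent, choosing the one with smallest source ID. Once $r$ holds all $n$ values, I would invoke Lemma~\ref{lem:k-bc} with $k=n$ to broadcast all of them from $r$ to the entire component in $O(n+n)=O(n)$ rounds. Summing the three phases gives the claimed $O(n)$ bound.

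The hard part will be the analysis of the upcast, i.e.\ showing that collecting $n$ distinct values at $r$ pipelines to $O(n)$ rounds even though many subtrees merge at internal nodes. The key claim is that, under the smallest-ID-first rule, the value with the $i$-th smallest source ID reaches $r$ by round $i+\mathrm{depth}(T)$. The argument is that along the shared suffix of two root-paths a smaller value is always one step ahead of a larger one, so each of the $i-1$ values smaller than the $i$-th can delay it by at most one round; hence its arrival time is at most its depth plus $i-1$, which is at most $i+\mathrm{depth}(T)$. Since there are at most $n$ values and $\mathrm{depth}(T)\le n-1$, all values reach $r$ within $O(n)$ rounds, matching the pipelining bound already established in Lemma~\ref{lem:k-bc}.
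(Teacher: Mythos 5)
Your proof is correct but takes a genuinely different route from the paper. The paper's proof is a one-line reduction: it piggybacks the $n$ broadcasts on an existing $O(n)$-round unweighted APSP algorithm (all $n$ BFS computations run concurrently, with each distance-estimate message carrying the source's value as well), so correctness and the round bound are inherited from the cited algorithms. You instead give a self-contained gather-and-scatter argument through a single root per component: elect the minimum-ID node $r$ by flooding (which simultaneously yields a BFS tree), pipeline all $n$ values up to $r$ by a priority-based convergecast, then invoke Lemma~\ref{lem:k-bc} with $k=n$ to disseminate them. What the paper's proof buys is brevity, at the cost of leaning on external machinery; what yours buys is that the whole lemma rests only on elementary tree pipelining plus Lemma~\ref{lem:k-bc}, which fits the spirit of the paper's other primitives. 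One caveat on your upcast analysis: the assertion that ``along the shared suffix a smaller value is always one step ahead of a larger one'' is not literally true --- a smaller item can itself stall (being delayed by still smaller items), a larger item can then catch up to it and be delayed by the \emph{same} smaller item more than once. The correct statement is the standard upcast lemma, proved by a charging/induction argument: with a globally consistent priority order, the total delay of the rank-$i$ item is at most $i-1$, so it reaches the root by round $\mathrm{depth}(T) + i$, and all $n$ items arrive within $n + \mathrm{depth}(T) = O(n)$ rounds. Since that is exactly the bound you need and the lemma is classical, this is an imprecision in justification rather than a gap in the approach.
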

  
   \begin{proof}
  This  broadcast can be done in $O(n)$ rounds in many ways, for example by
  piggy-backing on
  an $O(n)$ round unweighted APSP algorithm~\cite{LP13,PR18} (and 
  also \cite{HW12,PRT12} 
  for undirected graphs)
  where now each message contains the value sent
  by source $s$ in addition to the current shortest path distance estimate for source $s$.
  \end{proof}
 
 \begin{lemma}
Algorithm~\ref{algMain} runs in $O(n\cdot h + (n^2/h) \cdot \log n)$ rounds 
 assuming Step 2 can be implemented to run within this bound.
 \end{lemma}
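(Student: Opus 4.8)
The plan is to bound the round complexity of each step of Algorithm~\ref{algMain} separately and then sum, using the primitive bounds from Lemmas~\ref{lem:bf} and~\ref{lem:k-bc}, together with the fact that the blocker set has size $|Q|=\Theta((n\log n)/h)$ and the assumed bound on Step~\ref{algMain:centers}. Since all steps except the local computation are performed as sequences of independent single-source subroutines, the total cost is just the number of invocations times the per-invocation cost.

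First I would handle Step~\ref{algMain:h-hop-sssp-tree}. For each of the $n$ vertices $v$ we run one $h$-hop Bellman-Ford computation; by Lemma~\ref{lem:bf}(b) this costs $h$ rounds and simultaneously produces the hop-lengths $h_v(x)$ and the parent pointers. Running these $n$ computations in sequence therefore contributes $n\cdot h$ rounds. Next, Step~\ref{algMain:bf} runs a full SSSP (up to $n$ hops) from each blocker $c\in Q$; by Lemma~\ref{lem:bf}(a) each such computation costs $n$ rounds, and iterating sequentially over the $|Q|=\Theta((n\log n)/h)$ blockers gives $O(n\cdot|Q|)=O((n^2\log n)/h)$ rounds.

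For Step~\ref{algMain:broadcast} I would observe that after Step~\ref{algMain:h-hop-sssp-tree} each blocker $c$ already holds the $n$ values $\{\delta_h(v,c): v\in V\}$, one for each tree rooted at some $v$. Broadcasting these $n$ values from $c$ costs $O(n+n)=O(n)$ rounds by Lemma~\ref{lem:k-bc} with $k=n$, so repeating this over all $|Q|$ blockers contributes $O(n\cdot|Q|)=O((n^2\log n)/h)$ rounds. Step~\ref{algMain:local-compute} is purely local: evaluating Equation~\ref{eq:SD} at each node from values already stored requires no communication, so it contributes $0$ rounds. Adding the assumed $O(nh+(n^2/h)\log n)$ bound for Step~\ref{algMain:centers} to these totals yields $O(nh+(n^2\log n)/h)$ overall, as claimed.

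The argument is essentially bookkeeping, so I do not expect a genuine obstacle beyond Step~\ref{algMain:centers} (whose bound is given). The only point requiring care is the accounting in Step~\ref{algMain:broadcast}: one must confirm that each blocker holds exactly $n$ of the $\delta_h$-values (so that the per-blocker broadcast cost via Lemma~\ref{lem:k-bc} is $O(n)$ rather than larger), and that the sequential repetition over $\Theta((n\log n)/h)$ blockers keeps the total at $O((n^2\log n)/h)$, which is absorbed into the target bound.
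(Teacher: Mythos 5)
Your proof is correct and follows essentially the same route as the paper: bound Step~\ref{algMain:h-hop-sssp-tree} by $n$ sequential $h$-hop Bellman--Ford runs via Lemma~\ref{lem:bf}(b), Steps~\ref{algMain:bf} and~\ref{algMain:broadcast} by $O(n\cdot|Q|)$ via Lemma~\ref{lem:bf}(a) and Lemma~\ref{lem:k-bc} with $k=n$, note Step~\ref{algMain:local-compute} is purely local, and fold in the assumed bound for Step~\ref{algMain:centers}. Your extra care about each blocker holding exactly $n$ of the $\delta_h$-values is a correct elaboration of the paper's terse ``using $k=n$''.
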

 
 \begin{proof}
 Let the size of the blocker set be $q = \frac{n}{h}\cdot \log n$.
 Using part (b) of Lemma~\ref{lem:bf}, 
 Step 1 can be computed in $O(n \cdot h)$ rounds by computing the $n$ $h$-hop SSSP trees in sequence. Step 3 can be computed in $O(n \cdot q) = O((n^2/h) \cdot \log n)$ rounds
by part (a)  of Lemma~\ref{lem:bf}. Step 4 can be computed in $O(n \cdot q) = O((n^2/h) \cdot \log n)$ rounds
by Lemma~\ref{lem:k-bc} (using $k=n)$. Finally, Step 5 involves only local
computation and no communication. This establishes the lemma.
\end{proof}

 In the next section we give a deterministic algorithm to compute Step~\ref{algMain:centers} in
 $O(n\cdot h + (n^2/h) \cdot \log n)$ rounds, which leads to 
 our main theorem
 (by using
 $h = \sqrt{n \cdot \log n})$.
 
 \begin{theorem}
 Algorithm~\ref{algMain} is a deterministic distributed algorithm for weighted APSP
 in directed or undirected graphs 
  that runs in 
 $O(n^{3/2} \cdot \sqrt{\log n})$ rounds in the \congest{} model.
 \end{theorem}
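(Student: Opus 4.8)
The plan is to combine the three properties that an algorithm in the \congest{} model must satisfy here — correctness, determinism, and the claimed round bound — and then to optimize the single free parameter $h$. Correctness is essentially already in hand: the lemma establishing that the $\delta(u,v)$ values computed in Step~\ref{algMain:local-compute} equal the true shortest path distances rests only on the defining property of the \blocker $Q$, namely that every root-to-leaf path of length $h$ in every $h$-hop SSSP tree contains a vertex of $Q$. Thus any shortest $u$–$v$ path with more than $h$ hops splits at some $c\in Q$ into a prefix realizing $\delta_h(u,c)$ and a suffix realizing $\delta(c,v)$, so Equation~\ref{eq:SD} recovers its weight, while paths of at most $h$ hops are captured directly by $\delta_h(u,v)$.

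Next I would verify determinism by inspecting each step in turn. Steps~\ref{algMain:h-hop-sssp-tree} and~\ref{algMain:bf} invoke Bellman-Ford (Lemma~\ref{lem:bf}), Step~\ref{algMain:broadcast} uses the pipelined BFS-tree broadcast (Lemma~\ref{lem:k-bc}), and Step~\ref{algMain:local-compute} is purely local. The one remaining ingredient is the \blocker computation of Step~\ref{algMain:centers}, which is supplied deterministically by the algorithm of Section~\ref{sec:blocker}. Since every primitive is deterministic and no random bits are consumed anywhere, the whole algorithm is deterministic.

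For the round bound I would appeal to the preceding lemma, which shows that Algorithm~\ref{algMain} runs in $O(nh + (n^2/h)\log n)$ rounds \emph{provided} Step~\ref{algMain:centers} can be implemented within the same bound. The hard part, and the only genuinely nontrivial obstacle, is exactly this proviso: one must exhibit a deterministic distributed routine that computes a \blocker of size $\Theta((n\log n)/h)$ in $O(nh + (n^2/h)\log n)$ rounds. I would defer this to the dedicated treatment in Section~\ref{sec:blocker}; granting it, the per-step accounting already carried out (Step~1 in $O(nh)$, Steps~3 and~4 each in $O((n^2/h)\log n)$ via $O(n\cdot q)$ with $q=\Theta((n\log n)/h)$, and Step~5 free) yields the stated total.

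Finally I would choose $h$ to balance the two competing terms. Setting $nh = (n^2/h)\log n$ gives $h^2 = n\log n$, i.e.\ $h=\sqrt{n\log n}$, under which both terms become $n^{3/2}\sqrt{\log n}$. Substituting this value into $O(nh + (n^2/h)\log n)$ therefore gives round complexity $O(n^{3/2}\sqrt{\log n})$, and since the same analysis is insensitive to the edge orientations (the \congest{} links are bidirectional and Bellman-Ford applies verbatim to the directed case), the bound holds for both directed and undirected $G$, completing the proof.
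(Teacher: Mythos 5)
Your proposal is correct and follows essentially the same route as the paper: correctness via the lemma for Step~\ref{algMain:local-compute} and the blocker set property, the conditional round bound $O(nh + (n^2/h)\log n)$ from the preceding lemma with Step~\ref{algMain:centers} supplied by the deterministic blocker set algorithm of Section~\ref{sec:blocker}, and finally balancing the two terms with $h=\sqrt{n\log n}$. Nothing is missing; the paper itself treats the theorem as an immediate consequence of exactly these ingredients.
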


\section{Computing a Blocker Set Deterministically}\label{sec:blocker}

  The simplest method to find a 
 \centers is to chose the vertices randomly.  
  An early use of
  this method for path problems in graphs was in
  Ullman and Yannakakis~\cite{UY91} where a random set of $O(\sqrt n \cdot \log n)$
   distinguished nodes
   was picked. It is readily seen that some vertex in this set will 
  intersect any path of $O(\sqrt n)$ vertices in the graph
 (and so this set would serve as a blocker set of size $O((n \log n)/h)$for our algorithm if $h= \sqrt n$). 
  Using this observation
   an improved randomized parallel algorithm
  (in the PRAM model) was given in~\cite{UY91} to compute the transitive closure. Since then this method of 
  using random sampling to choose
  a suitable \centers  has been used extensively in parallel and dynamic computation of transitive closure and shortest paths, and more recently, in distributed 
  computation of APSP~\cite{HNS17}.
  
  It is not clear if the above simple randomized strategy can be derandomized in its full generality. However, for our purposes
  a \centers only needs to intersect all paths in the set of hop trees we construct in Step~\ref{algMain:h-hop-sssp-tree} of Algorithm~\ref{algMain}. For this,
  a deterministic sequential algorithm for computing a \centers  was given in King~\cite{King99} in order to compute
  fully dynamic APSP. This algorithm computes a \centers  of size $O((n/h) \ln p)$
  for a collection $F$ of $h$-hop trees with a total of
  $p$ leaves across all trees (and hence $p$ root to leaf paths) in an $n$-node graph.
  In our setting $p\leq n^2$ since we have $n$ trees and each tree could have up to $n$ leaves.
  
  King's sequential blocker set algorithm uses the following simple observation: 
  Given a collection of $p$ paths each with exactly $h$ nodes from an underlying
  set $V$ of $n$ nodes, there must exist a vertex that is contained in at least
$ph/n$ paths. The algorithm adds one such vertex $v$ to the blocker set,
removes all paths that are covered by this vertex and repeats this process until no path remains in the collection. The number of paths is
reduced from $p$ to at most $(1-h/n) \cdot p$ when the blocker vertex $v$ is removed, hence after $O((n/h) \ln p)$ removals of
vertices, all paths are removed. Since $p$ is at most $n^2$ the size of the blocker set is $O((n \log n)/h)$. 
King's sequential algorithm for finding a blocker set
runs in $O(n^2 \log n)$ deterministic time.

We now describe our distributed algorithm to compute a blocker set.
As in King~\cite{King99},
for each vertex $v$ in a tree $T_x$  in the collection of trees $\Tau$ we define:

 \begin{itemize}
 \item  $score_x(v)$ is the number of leaves 
at depth $h$ in $T_x$ that are
in the subtree rooted at $v$ in $T_x$;
\item $score(v) = \sum_x score_x(v)$.
\end{itemize}
Thus, $score(v)$ is the number of 
root-to-leaf length paths of length $h$ in the collection of trees $\Tau$ that contain vertex $v$.
Initially, our distributed algorithm computes all $score_x(v)$ and $score(v)$ for all vertices $v\in V$ and all
$h$-hop trees $T_x$ in $O(n\cdot h)$ rounds using Algorithm~\ref{algCS}. Then through an all-to-all broadcast of $score(v)$ to all other nodes for all $v$,  all nodes identify the vertex $c$ with maximum score as the next 
blocker vertex to be removed from the
trees and added to the \centers  $Q$. (In case there are multiple vertices with the maximum score the algorithm chooses the
vertex of minimum id having this maximum score. This ensures that all vertices will locally choose the same vertex as the
next blocker vertex once they have received the scores of all vertices.) We repeat this process until all scores are zeroed out. 
By the discussion above (and as observed in~\cite{King99}) we will identify all 
 the vertices in $Q$
 in 
$O((n \cdot \log n)/h)$ 
repeats of this process. 

What remains is to obtain an $O(n)$ round procedure to update the
$score$  and $score_x$ values at all nodes each time a 
vertex $c$ is removed so that  we have the correct 
values at each node for each tree when the leaves covered by $c$ are removed from the tree.

If a vertex $v$ is a descendant of the removed 
vertex $c$ in $T_x$ then all paths in $T_x$ that pass
through $v$ are removed when $c$ is removed and hence $score_x(v)$ needs to go down to zero for each
such tree $T_x$ where $v$ is a descendant of 
the chosen blocker node
$c$. In order to faciliate an $O(n)$-round
computation of these updated $score_x$ values in each tree at all nodes that are descendants of 
$c$, we initially precompute at every node $v$ a list $Anc_x(v)$ all of its ancestors in each tree $T_x$.
This is computed in $O(n \cdot h)$ rounds using Algorithm~\ref{algAnc}. Thereafter, each time a
new blocker vertex $c$ is selected to be removed from the trees and added to $Q$, it is a local computation
at each node $v$ to determine which of the $Anc_x(v)$ sets at $v$ contain $c$ and to zero out $score_x(v)$ for each
such $x$.

The other type of vertices whose scores change after a vertex $c$ is removed are the ancestors of
$c$ in each tree. If $v$ is an ancestor of $c$ in $T_x$ then after $c$ is removed $score_x(v)$ needs
to be reduced by $score_x(c)$ (i.e., $c$'s score before it was removed and added to $Q$) since these paths no longer need
to be covered by $v$. For these ancestor updates we give an $O(n)$-round algorithm that runs after the
addition of each new blocker node to $Q$
 and correctly updates the scores for these ancestors in every tree
(Algorithm~\ref{algAU}). These algorithms together give the overall deterministic algorithm 
(Algorithm~\ref{algCB}) for the
computation of the \centers   $Q$  in $O(n \cdot h + (n^2 \log n)/h)$ rounds.

We now give the details of our algorithms.
  We assume that for a tree $T_s$ rooted at
$s$, each node $v$ in the tree knows $\delta(s,v)$, its shortest path distance from $s$, its hop length 
$h_{s}(v)$ (the
number of edges on the tree path from $s$ to $v)$, its parent node,
and all its children in $T_s$.

\subsection{The Blocker Set Algorithm}

Algorithm~\ref{algCB} gives our distributed deterministic method to compute a blocker set.
It uses a collection of helper algorithms that are described in the next section.
This blocker set algorithm
 is at the heart of our main algorithm (Algorithm~\ref{algMain}, Step~\ref{algMain:centers})
for computing the exact weighted APSP.

\begin{algorithm}[H]	
\caption{{\sc Compute-Blocker}}
Input: set $\Tau$ of all $h$-hop trees $T_x$; \hspace{0.1in} Output: set $Q$
\begin{algorithmic}[1]	
\State {\bf Initialization [lines~\ref{algCB:runAlgCS}-\ref{algCB:broadcast}]:} Run Algorithm~\ref{algCS} to compute scores for all $v \in V$	\label{algCB:runAlgCS}
\State For each $T_x$ compute the ancestors of each vertex $v$ in $T_x$ in $Anc_x(v)$ using Algorithm~\ref{algAnc}  \label{algCB:anc}
\For{each $v\in V$}\label{algCB:forloop}
\State {\bf Local Step:} $score(v) \leftarrow \sum_{x\in V} score_x (v)$	\label{algCB:local}
\State broadcast $\sv$ to all nodes in $V$ (using Lemma~\ref{lem:all-to-all-bc})	\vspace{0.05in} \label{algCB:broadcast}
\State {\bf Add blocker vertices to blocker set $Q$ [lines~\ref{algCB:startWhile}-\ref{algCB:endWhile}]:}
\State {\bf while} there is a node $c$ with $score (c) > 0$	\label{algCB:startWhile}
\State \hspace{.5in} {\bf Local Step:} select the node $c$ with max score as next vertex in $Q$
\label{algCB:select}
\State  \hspace{.5in} Run Algorithms~\ref{algDU} and \ref{algAU} to update $\sxv$ for each $x \in V$ and $\sv$	\label{algCB:updateScores}
\State \hspace{.5in} broadcast $\sv$ to all nodes in $V$ and receive $score(x)$ from all other nodes $x$	\label{algCB:endWhile}
\EndFor
\end{algorithmic} \label{algCB}
\end{algorithm}

Step~\ref{algCB:runAlgCS} of Algorithm~\ref{algCB} executes 
Algorithm~\ref{algCS}  to compute all the initial scores at all nodes $v$.
Step~\ref{algCB:anc} involves running Algorithm~\ref{algAnc} 
for pre-computing ancestors of each node in every $T_x$. 
Step~\ref{algCB:local} is a local computation  (no communication) where 
all nodes $v$ compute their total score by summing up the scores 
for all trees $T_x$ 
 to which they belong.
And in Step~\ref{algCB:broadcast},
each node $v$ broadcasts its score value to all other nodes.

The while loop in Steps~\ref{algCB:startWhile}-\ref{algCB:endWhile} of Algorithm~\ref{algCB} runs
as long as there is a  node with positive score.
In Step~\ref{algCB:select},
the node with maximum score is selected as the 
vertex $c$ to be added to $Q$
(and if there are multiple nodes with the maximum score, 
then among them the node with the minimum ID is selected,
so that the same node is selected locally at every vertex).
In Step~\ref{algCB:updateScores},
after 
blocker vertex $c$ is selected,
each
node $v$ checks whether it is a descendant of $c$
in each $T_x$
and if so update its score for that tree
using
 Algorithm~\ref{algDU}.
 This is followed by an execution of Algorithm~\ref{algAU} which updates the scores at each node $v$ for each
 tree $T_x$ in which $v$ is an ancestor of $c$.
Then in Step~\ref{algCB:endWhile},
all the nodes broadcast their score to 
 all other nodes so that they can all 
select 
the next
vertex to be added to $Q$.
This leads to the following lemma, assuming the results shown in the next section.

\bl \label{lemma:centers}
Algorithm~\ref{algCB} correctly computes the \centers  $Q$ in 
$O(n \cdot h + n \cdot |Q|)$ rounds.
\el

\begin{proof}
Step 1 runs in $O(n \cdot h)$ rounds (by Lemma~\ref{lemma:alg2Runtime}) and so does
Step~\ref{algCB:anc} (see Lemma~\ref{lemma:computeAnc} below). Step 4 is a 
local computation and the broadcast in Step~\ref{algCB:broadcast} runs
in $O(n)$ rounds by Lemma~\ref{lem:all-to-all-bc}. 

The while loop starting in Step~\ref{algCB:startWhile}
 runs for $|Q|$ iterations since a new
blocker
vertex is added to $Q$ in each iteration. In each iteration, Step 7 is a local computation as
is the execution of  Algorithm~\ref{algDU} in Step 8.
Algorithm~\ref{algAU}  in Step 8 runs in $O(n)$ rounds (Lemma~\ref{lemma:algAncestorUpdate}).
The all-to-all broadcast in Step 9 is the same as the initial all-to-all broadcast in
Step~\ref{algCB:broadcast}  and runs in $O(n)$ rounds. Hence each iteration of the
while loop runs in $O(n)$ rounds giving the desired bound.
\end{proof}

\subsection{Algorithms for Computing and Updating Scores}

In this section we give the details of our algorithms for computing initial scores (Algorithm~\ref{algCS})
and for updating these scores values once a blocker vertex $c$ 
is selected and added to the blocker set $Q$ (Algorithms~\ref{algAnc}-\ref{algAU}).

\begin{algorithm}[H]	
\caption{Compute Initial scores for a node $v$ in $T_x$}
\begin{algorithmic}[1]	
\State {\bf Initialization [Local Step]:} {\bf if} $h_x (v) = h$ {\bf then} $score_x (v) \leftarrow 1$ {\bf else} $score_x (v) \leftarrow 0$ 	\label{algCS:init}
\State {\bf In round $r > 0$:}
\State {\bf send:} {\bf if} $r = h - h_x (v) + 1$ {\bf then} send $\langle score_x (v) \rangle$ to $parent_x (v)$	\label{algCS:send} \vspace{0.05in}
\State {\bf receive [lines~\ref{algCS:newreceiveStart}-\ref{algCS:receiveEnd}]:}  
\If{$r = h - h_x(v)$}	\label{algCS:newreceiveStart}
	\State let $\mathcal{I}$ be the set of incoming messages to $v$ 
	\For{{\bf each} $M \in \mathcal{I}$} \label{algCS:receiveStart}
		\State let $M = \langle score^- \rangle$ and let the sender be $w$
	 	\State {\bf if} $w$ is a child of $v$ in $T_x$ {\bf then} $score_x (v) \leftarrow score_x (v) + score^-$ 
	\EndFor \label{algCS:receiveEnd}
\EndIf
\end{algorithmic} \label{algCS}
\end{algorithm}

\vspace{-0.1in}

Algorithm~\ref{algCS} gives the procedure for computing the initial scores for a node $v$ in a tree $T_x$.
In Step~\ref{algCS:init}
each leaf node 
at depth $h$
initializes its score for $T_x$ to $1$ and 
all other nodes
 set their initial score to $0$.
In a general round $r > 0$,
nodes with $h_x (v) = h+1-r$ send out their scores to their parents 
and nodes with $h_x(v) = h-r$ will receive all the scores from its children in $T_x$
and set its score equal to the sum of these received scores (Steps~\ref{algCS:newreceiveStart}-\ref{algCS:receiveEnd}).

\bl \label{lemma:alg2Runtime}
Algorithm~\ref{algCS} computes the initial scores for every node $v$ in $T_x$
in $O(h)$ rounds.
\el

\begin{proof}
The leaves at depth $h$ correctly initialize their score to $1$ locally in Step~\ref{algCS:init}.
Since we only consider paths of length $h$ from the root $x$ to a leaf, it
is readily seen that a node $v$ that is $h_x (v)$ hops away from $x$ in $T_x$ 
will receive scores from its children in round $h - h_x(v)$
and thus will have the correct $score_x (v)$ value to send in Step~\ref{algCS:send}.
\end{proof}

For every $x \in V$,
every node $v \in T_x$ will run this algorithm to compute their score in $T_x$.
Since every run of Algorithm~\ref{algCS} for a given $x$ takes $h$ rounds,
all the initial scores can be computed in $O(n\cdot h)$ rounds.

\vspace{-0.05in}
\begin{algorithm}[H]	
\caption{{\sc Ancestors $(v,x)$:} Algorithm for computing ancestors of node $v$ in $T_x$ at round $r$ }
\begin{algorithmic}[1]	
\State {\bf Initialization [Local Step]:} $Anc_x (v) \la \phi$ 
\State {\bf In round $r > 0$:}
\State {\bf send [lines~\ref{algAnc:sendStart}-\ref{algAnc:sendEnd}]:}
\If{$r = 1$}	\label{algAnc:sendStart}
	\State send $\langle v \rangle$ to $v$'s children in $T_x$	\label{algAnc:sendRound1}
\Else
	\State let $\langle y \rangle$ be the message $v$ received in round $r-1$	\label{algAnc:sendy1}
	\State send $\langle y \rangle$ to $v$'s children in $T_x$	\label{algAnc:sendy2}
\EndIf	\label{algAnc:sendEnd}
\State {\bf receive [lines~\ref{algAnc:receiveStart}-\ref{algAnc:receiveEnd}]:}
\State let $\langle y \rangle$ be the message $v$ received in this round	\label{algAnc:receiveStart}
\State add $y$ to $Anc_x (v)$	\label{algAnc:receiveEnd}
\end{algorithmic} \label{algAnc}
\end{algorithm}

Algorithm~\ref{algAnc} describes our algorithm for 
precomputing
the ancestors of each node $v$ in a
tree $T_x$
of height $h$.
In round 1, 
every node $v$ sends its ID to its children in $T_x$ as described in Step~\ref{algAnc:sendRound1}.
And in a general round $r$,
$v$ sends the ID of the ancestor that it received in round $r-1$ (Steps~\ref{algAnc:sendy1}-\ref{algAnc:sendy2}).
If a node $v$ receives the ID of an ancestor $y$,
then it immediately adds it to its ancestor set, $Anc_x (v)$ (Steps~\ref{algAnc:receiveStart}-\ref{algAnc:receiveEnd}).

\bl \label{lemma:computeAnc}
For a tree $T_x$
of height $h$
 rooted at vertex $x$, 
Algorithm~\ref{algAnc} correctly computes the set of ancestors for all nodes $v$ in $T_x$ in $O(h)$ rounds. 
\el

\begin{proof}
We show that all nodes $v$ correctly computes all their ancestors in $T_x$ in the set $Anc_x (v)$ 
using induction on round $r$.
We show that by round $r$,
every node $v$ has added all its ancestors that are at most $r$ hops away from $v$.

If $r = 1$, then $v$'s parent in $T_x$ (say $y$) would have send out its $ID$ to $v$ in Step~\ref{algAnc:sendRound1}
and $v$ would have added it to $Anc_x (v)$ in Step~\ref{algAnc:receiveEnd}.

Assume that every node $v$ has already added all ancestors in the set $Anc_x (v)$ 
that are at most $r-1$ hops away from $v$.

Let $u$ be the ancestor of $v$ in $T_x$ that is exactly $r$ hops away from $v$.
Then by induction, 
$u \in Anc_x (y)$ since $u$ is exactly $r-1$ hops away from $y$
and thus $y$ must have send $u$'s $ID$ to $v$ in round $r$ in Step~\ref{algAnc:sendy2} 
and hence $v$ would have added $u$ to its set $Anc_x (v)$ in
round $r$ in Step~\ref{algAnc:receiveEnd}. 
\end{proof}

\begin{algorithm}[H]	
\caption{Algorithm for updating scores at $v$ when $v$ is a descendant of new blocker node $c$}
Input: blocker vertex $c$ added to $Q$.\\
There is no communication in this algorithm, it is entirely a {\bf local computation} at 
$v$.
\begin{algorithmic}[1]	
\If{$score (v) \neq 0$}
	\For{{\bf each} $x \in V$}
		\If{$c \in Anc_x (v)$}	\label{algDU:check}
			\State $score (v) \la score (v) - score_x (v)$	\label{algDU:update1}
			\State $score_x (v) \la 0$	\label{algDU:update2}
		\EndIf
	\EndFor
\EndIf
\end{algorithmic} \label{algDU}
\end{algorithm}

Once we have pre-computed the $Anc_x(v)$ sets for all vertices $v$ and all trees $T_x$ using
Algorithm~\ref{algAnc}, updating the scores at each node  for all trees in which it is a descendant
of the newly chosen blocker node $c$ becomes a purely local computation.
Algorithm~\ref{algDU} describes the algorithm at node $v$ that updates its scores
after  a vertex $c$ is 
added as a blocker node to $Q$.
At node $v$ for each given $T_x$,
$v$ checks if $c \in Anc_x (v)$
and if so update its score values in Steps~\ref{algDU:update1}-\ref{algDU:update2}.

\bl \label{lemma:algDescendantUpdate}
Given a 
blocker vertex $c$, 
Algorithm~\ref{algDU} correctly updates the scores of all nodes $v$ 
such that $v$ is a descendant of $c$ in some tree $T_x$.
\el

\begin{proof}
Fix a vertex $v$ 
and a tree $T_x$ 
such that $v$ is a descendant of $c$ in $T_x$.
By Lemma~\ref{lemma:computeAnc} $c \in Anc_x (v)$,
and thus $v$ will correctly update its score values in Steps~\ref{algDU:update1}-\ref{algDU:update2}.
\end{proof}

We now move to the last remaining part of the blocker set algorithm: our method to
correctly update scores at ancestors of the newly chosen blocker node $c$ in each $T_x$. Recall
that if $v$ is an ancestor of $c$ in $T_x$ we need to subtract $score_x(c)$ from $score_x(v)$.
Here, in contrast to Algorithms~\ref{algAnc} and \ref{algDU} for nodes that are descendants of $c$ in
a tree, we do not precompute anything. Instead we give an $O(n)$-round method in Algorithm~\ref{algAU}
to correctly update scores for each vertex for all trees in which that vertex is an ancestor of $c$.

Before we describe Algorithm~\ref{algAU} we establish the following lemma, which is key to our
$O(n)$-round method.

\bl \label{lemma:intree}
Fix a vertex $c$.
For each root vertex $x \in V -\{c\}$,
let $\pi_{x,c}$ be the path from $x$ to $c$ in the $h$-hop SSSP tree $T_x$. 
Let $T = \cup_{x \in V-\{c\}} \{e ~|~e$ lies on  $\pi_{x,c}\}$,
i.e., $T$ is the set of edges that lie on some $\pi_{x,c}$.
Then $T$ is an in-tree rooted at $c$.
\el

\begin{proof}
If not,  there exists some $x, y \in V-\{c\}$ 
such that $\pi_{x,c}$ and $\pi_{y,c}$ coincide first at some vertex $z$ 
and  the subpaths in $\pi_{x,c}$ and $\pi_{y,c}$ from $z$ to $c$ are different.

Let these paths coincide again at some vertex $z'$ (such a vertex exists since their endpoint is same)
after diverging from $z$.
Let the subpath from $z$ to $z'$ in $\pi_{x,c}$ be $\pi^1_{z,z'}$
and the corresponding subpath in $\pi_{y,c}$ be $\pi^2_{z,z'}$.
Similarly let $\pi_{x,z}$ be the subpath of $\pi_{x,c}$ from $x$ to $z$
and let $\pi_{y,z}$ be the subpath of $\pi_{y,c}$ from $y$ to $z$.

Clearly both $\pi^1_{z,z'}$ and $\pi^2_{z,z'}$ have equal weight
(otherwise one of $\pi_{x,c}$ or $\pi_{y,c}$ cannot be a shortest path).
Thus the path $\pi_{x,z}\circ \pi^2_{z,z'}$ is also a shortest path.

Let $(a,z')$ be the last edge on the path $\pi^{1}_{z,z'}$ 
and $(b,z')$ be the last edge on the path $\pi^{2}_{z,z'}$.

Now since the path $\pi_{x,z'}$ has $(a,z')$ as the last edge 
and we break ties using the IDs of the vertices,
hence $ID(a) < ID(b)$.
But then the shortest path $\pi_{y,z'}$ must also have chosen $(a, z')$ as the last edge 
 and hence $\pi_{y,z}\circ \pi^{1}_{z,z'}$ must be the subpath of path $\pi_{y,c}$,
 \rco
 \end{proof}

\begin{algorithm}[H]	
\caption{Pipelined Algorithm for updating scores at $v$ for all trees $T_x$ in which $v$ is an ancestor of newly chosen blocker node $c$}
Input: current \centers $Q$, newly chosen blocker node $c$
\begin{algorithmic}[1]	
\State {\bf Send [lines~\ref{algAU:initStart}-\ref{algAU:initEnd}]: (only for  $c$)}
\State {\bf Local Step at $c$:} create a list $list_c$ and {\bf for each}  $x \in V$ {\bf do} add an entry $Z = \langle x, score_x (c) \rangle$ to $list_c$ if $score_x (c) \neq 0$; then set $score_x (c)$ to $0$ for each $x \in V$ and set $score (c)$ to $0$	\label{algAU:initStart}
\State {\bf Round $i$:} let $Z = \langle x, score_x (c) \rangle$ be the $i$-th entry in $list_c$; send $\langle Z \rangle$ to $c$'s parent in $T_x$ \label{algAU:initEnd}
\State {\bf In round $r > 0$: (for vertices $v \in V - Q -\{c\}$)} \vspace{0.05in}
\State {\bf send [lines~\ref{algAU:sendStart}-\ref{algAU:sendEnd}]:}
\If{$v$ received a message in round $r-1$}	\label{algAU:sendStart}
	\State let that message be $\langle Z \rangle = \langle x, score_x (c) \rangle$.	
	\State { \bf if} $v \neq x$ {\bf then} send $\langle Z\rangle$ to $v$'s parent in $T_x$ \label{algAU:sendEnd} \vspace{0.05in}	
\EndIf
\State {\bf receive [lines~\ref{algAU:receiveStart}-\ref{algAU:receiveEnd}]:}
\If{$v$ receives a message $M$ of the form $\langle x, score_x (c) \rangle$}	\label{algAU:receiveStart}
	\State $\sxv \leftarrow \sxv - score_x (c)$; $\sv \leftarrow \sv - score_x (c)$	\label{algAU:update}
\EndIf	\label{algAU:receiveEnd}
\end{algorithmic} \label{algAU}
\end{algorithm}

Lemma~\ref{lemma:intree} allows us to re-cast the task for ancestor nodes to the following (where
we use the notation in the statement of Lemma~\ref{lemma:intree}):
the new blocker node $c$ needs to send $score_x(c)$ to all nodes on $\pi_{x,c}$ for each tree
$T_x$. Recall that in
the \congest{} model for directed graphs the graph edges are bi-directional. Hence this task can
be  accomplished by having $c$ send out $score_x(c)$ for each tree $T_x$ (other than $T_c$)
in $n-1$ rounds, one score
per round (in no particular order) along the parent edge for $T_x$. Each message 
$\langle x, score_x(c)\rangle$ will move along edges in $\pi_{x,c}$ (in reverse order) along parent edges in $T_x$ from $c$ to $x$.
Consider any node $v$. In general it will be an ancestor of $c$ in some subset of the $n-1$ trees
$T_x$. But the characterization in Lemma~\ref{lemma:intree} establishes that the incoming edge to
$v$ in all of these trees is the same edge $(u,v)$ and this is the unique edge on the path from $c$ to
$v$ in tree $T$ ($T$ is defined in the statement of Lemma~\ref{lemma:intree}). In fact, the  
messages for all of the trees in which $v$ is an ancestor of $c$ will traverse exactly the same path from
$c$ to $x$. Hence,  for the messages sent out by $c$ for the different trees in $n-1$ different rounds
(one for each tree other than $T_c$), if each vertex simply forwards any message
$\langle x, score_x(c)\rangle$  it receives to its parent
in tree $T_x$ all messages will be pipelined to all ancestors in $n-1 +h$ rounds. This  is what is
done in Algorithm~\ref{algAU}, whose steps we describe below, for completeness.

Step~\ref{algAU:initStart} of Algorithm~\ref{algAU} is local computation at the new  blocker vertex $c$
where for each $T_x$ to which $c$ belongs,
$c$ adds an  entry  $\langle x, score_x (c) \rangle$ to a local list $list_c$.
In round $i$, 
$c$ sends the $i$-th entry in its list, say $\langle y, score_y(c) \rangle$, to its parent in $T_y$.
For node $v$ other than $c$,
in a general round $r > 0$,
if $v$ receives a message for some $x \in V$
it updates its score value for $x$ (Steps~\ref{algAU:receiveStart}-\ref{algAU:update})
and then forwards this message to its parent in $T_x$ in round $r+1$ (Step~\ref{algAU:sendStart}-\ref{algAU:sendEnd}).

 \bl \label{lemma:algAncestorUpdate}
Given a new
blocker vertex $c$, 
Algorithm~\ref{algAU} correctly updates the scores of all nodes $v$ 
in every tree $T_x$ in which $v$ is an ancestor of $c$ in $O(n+h)$ rounds.
\el

\begin{proof}
Correctness of Algorithm~\ref{algAU} was argued above. For the number of rounds, 
$c$ sends out it last message in round $n-1$, and if $\pi_{v,c}$ has length $k$ then
$v$ receives all messages sent to it by round $n-1+k$. Since we only have $h$-hop trees
$k \leq h$ for all nodes, and the lemma follows.
\end{proof}

\section{Conclusion}\label{sec:concl}

We have presented a new distributed algorithm for 
the  exact computation of weighted all pairs shortest paths in
 both directed and undirected graphs.
This algorithm runs in $O(n^{3/2} \cdot \sqrt{\log n})$ rounds and is the first $o(n^2)$-round
deterministic algorithm for 
this problem in the {\sc Congest} model.
At the heart of our algorithm is a deterministic algorithm for
computing blocker set.
Our blocker set construction may have applications in other distributed algorithms that need to identify
a relatively small set of vertices that intersect all paths in a set of paths with the same (relatively long) length.

The main open question left by our work is to improve the round-bound for deterministic weighted APSP.
In the unweighted case APSP can be computed in $O(n)$ rounds~\cite{HW12,PRT12,LP13,PR18}, and weighted APSP can be computed in
$\tilde{O}(n^{5/4})$ rounds~\cite{HNS17} w.h.p. with randomization. Considering the simplicity of our
algorithm and the dramatic improvement over the previous (trivial) bound, we believe that further 
improvements could be achievable.
 Also of independent interest is to explore better distributed algorithms to find
a blocker set.

Very recently in~\cite{AR18},
a new pipelined algorithm for Step~\ref{algMain:h-hop-sssp-tree} of our Algorithm~\ref{algMain}
was presented that runs in $O(n\sqrt{h})$ rounds (improved from $O(nh)$ that we have here).
This, together with some changes to the blocker set algorithm in Step~\ref{algMain:centers},
gives an $\tilde{O}(n^{4/3})$ rounds deterministic algorithm for APSP in the Congest model.

\newpage
\bibliographystyle{abbrv}
\bibliography{references}

\end{document}